\theoremstyle{plain}
\newtheorem{thm}{Theorem}[section]
\newtheorem{theorem}[thm]{Theorem}
\newtheorem{conjecture}[thm]{Conjecture}
\newtheorem{corollary}[thm]{Corollary}
\newtheorem{question}[thm]{Question}
\newtheorem{claim}[thm]{Claim}
\newtheorem*{claim*}{Claim}
\theoremstyle{remark}
\newtheorem{remark}[thm]{Remark}
\def\r{{\mathrm{r}}}                               
\def\p{{\mathrm{p}}}                               
\def\nnr{{\mathrm{nnr}}}                               
\newcommand{\R}{\mathbb{R}}
\DeclareMathOperator{\polylog}{polylog}
\newcommand{\eps}{\varepsilon}
\def\1{\mathbf{1}} 
\def\0{\mathbf{0}}
\DeclareMathOperator{\ur}{srr}
\DeclareMathOperator{\sumc}{sum}
\title{The Log-Rank Conjecture: New Equivalent Formulations}
\author{Lianna Hambardzumyan \thanks{The University of Copenhagen, Denmark. \texttt{lianna.hambardzumyan@gmail.com}. This work is funded by the European Research Council (ERC) under grant agreement no. 101125652 (ALBA).} \footnotemark[4]
\and
Shachar Lovett \thanks{University of California San Diego, CA, USA. \texttt{slovett@ucsd.edu}. Research supported by Simons Investigator Award \#929894 and NSF award CCF-2425349.}
\and
Morgan Shirley \thanks{Lund University, Sweden. \texttt{morgan.shirley@cs.lth.se}. Supported by Knut and Alice Wallenberg grant KAW 2023.0116.} \thanks{Most of the work was done while the first and third authors were postdoctoral researchers at University of Victoria, Canada, and funded by NSERC.}
}
\begin{document}

\maketitle
\begin{abstract}
    The log-rank conjecture is a longstanding open problem with multiple equivalent formulations in complexity theory and mathematics. In its linear-algebraic form, it asserts that the rank and partitioning number of a Boolean matrix are quasi-polynomially related.

    We propose a relaxed but still equivalent version of the conjecture based on a new matrix parameter, signed rectangle rank: the minimum number of all-1 rectangles needed to express the Boolean matrix as a $\pm 1$-sum. Signed rectangle rank lies between rank and partition number, and our main result shows that it is in fact equivalent to rank up to a logarithmic factor. Additionally, we extend the main result to tensors.
    This reframes the log-rank conjecture as: can every signed decomposition of a Boolean matrix be made positive with only quasi-polynomial blowup? 
    
    As an application, we prove an equivalence between the log-rank conjecture and a conjecture of Lovett and Singer–Sudan on cross-intersecting set systems. 
\end{abstract}

\section{Introduction}
The log-rank conjecture is one of the most well-known problems in complexity theory, and despite extensive work it remains unsolved. It asserts that for a Boolean matrix its communication complexity and the logarithm of its matrix rank over the reals are polynomially related. An equivalent linear-algebraic formulation of the conjecture is that for Boolean matrices, the matrix rank and  partitioning number (sometimes called the binary rank) are quasi-polynomially related. 

More formally, for a Boolean matrix $M$ let $\r(M)$ denote its rank over the reals, and let $\p(M)$ denote its partitioning number: the minimum number of all-$1$ submatrices that partition the $1$-entries of $M$.  Equivalently, the partitioning number is the minimum number $p$ such that $M = \sum_{i=1}^p R_i$, where each $R_i$ is a \emph{primitive} matrix---an all-$1$ submatrix, possibly after adding all-zero rows and columns. 
If we relax the decomposition to allow general rank-1 matrices instead of primitive matrices, we recover the standard matrix rank. That is, $\r(M)$ is the minimum number $r$ such that $M=\sum_{i=1}^r M_i$, with each $M_i$ of rank-$1$. In this sense, matrix rank can be seen as a relaxation of the partitioning number, and trivially $\r(M) \leq \p(M)$. 
The log-rank conjecture asks how well this relaxation estimates the partitioning number:
\begin{conjecture}[Log-rank conjecture \cite{lovasz1988lattices}]\label{conj:log-rank}
	For any Boolean matrix $M$, \[\log \p(M) \leq (\log \r(M))^{O(1)}.\]
\end{conjecture}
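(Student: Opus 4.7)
The plan is to factor the conjecture through an intermediate parameter — the \emph{signed rectangle rank} $\ur(M)$ — defined as the minimum $s$ for which one can write $M = \sum_{i=1}^{s} \varepsilon_i R_i$ with each $R_i$ a primitive (all-$1$) matrix and each $\varepsilon_i \in \{\pm 1\}$. Since primitives have rank $1$ and partitioning only allows \emph{positive} combinations of primitives, we immediately obtain
\[\r(M) \le \ur(M) \le \p(M).\]
Thus \cref{conj:log-rank} reduces to two sub-problems: (i)~$\log \ur(M) \le (\log \r(M))^{O(1)}$, and (ii)~$\log \p(M) \le (\log \ur(M))^{O(1)}$.

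For step~(i), I would start from a real rank-$r$ decomposition $M = \sum_{i=1}^{r} u_i v_i^{\top}$ and try to round each rank-$1$ term into a short signed combination of combinatorial rectangles. A cleaner route is iterative: by a Lovász–Saks / Nisan–Wigderson–style argument any Boolean matrix of rank $r$ contains a monochromatic combinatorial rectangle of normalized area at least $2^{-\polylog(r)}$; peel it off with the appropriate $\pm 1$ sign and recurse on the residual matrix, whose rank grows by at most one. Controlling the number of rectangles used over the recursion should give $\log \ur(M) \le \polylog(\r(M))$, and with a careful choice of rectangle at each step one might hope for only an $O(\log r)$ blow-up, matching what the abstract announces.

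Step~(ii) is what I expect to be the real obstacle. Given a short signed decomposition $M = \sum_i R_i^{+} - \sum_j R_j^{-}$, one must convert it into a positive decomposition of comparable size. The natural plan is combinatorial: wherever $M$ equals $0$ the positive and negative parts cancel, so each $R_j^{-}$ is ``covered'' by the $R_i^{+}$'s in a structured way, and one would like to merge them using a cross-intersection argument — choosing, for each negative primitive, a small family of positive primitives whose union contains it, then refining the positive cover accordingly.

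However, this is precisely where the combinatorial core of the log-rank conjecture sits. The abstract of the paper asserts that step~(ii) is equivalent to the Lovett and Singer–Sudan conjectures on cross-intersecting set systems, themselves open. So the reformulation isolates the difficulty into a clean combinatorial statement but does not dissolve it; my proposal would stop at the reformulation of \cref{conj:log-rank} as this cross-intersection question, rather than claim to settle the conjecture itself.
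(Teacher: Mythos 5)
The statement you were asked about is the log-rank conjecture itself, which the paper does \emph{not} prove --- it only reformulates it. You correctly recognize this and wisely stop at the reformulation rather than claiming a proof, so at the meta-level your understanding of the paper's role is sound.

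However, your proposed argument for step~(i), the bound $\ur(M)\le \polylog(\r(M))$ or better, is circular. You suggest extracting a monochromatic rectangle of density $2^{-\polylog(r)}$ and peeling it off recursively. But the existence of such a rectangle is exactly \cref{conj:nw} --- the Nisan--Wigderson formulation of the log-rank conjecture --- and is itself open; what is \emph{known} unconditionally is only density $2^{-\tilde O(\sqrt{r})}$. Moreover, if you did have $2^{-\polylog(r)}$-density monochromatic rectangles, the standard Nisan--Wigderson recursion would already yield $\log\p(M)\le\polylog(\r(M))$ directly, which would settle the whole conjecture and make your two-step split moot --- a sign that this route cannot be the intended one. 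The paper's proof of \cref{thm:main} is genuinely different and, crucially, unconditional: call a set $S$ of columns \emph{independent} if all $2^{|S|}$ subset column-sums are distinct. A counting argument (the subset-sums live in a rank-$\le r$ space with entries in $\{0,\dots,|S|\}$, hence at most $(|S|+1)^r$ of them) shows $|S|=O(r\log r)$; maximality of $S$ then forces every column of $M$ to be a $\pm 1$-combination of columns in $S$, which directly yields $\ur(M)\le 2|S|=O(r\log r)$. No monochromatic rectangle is ever needed, and the bound is linear in $r$ up to a log factor rather than merely quasi-polynomial.

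Your discussion of step~(ii) and of the cross-intersecting reformulation is fine in spirit, though the paper proves the equivalence via the matrix $M_{\mathcal S,\mathcal T}[i,j]=|S_i\cap T_j|$ and a conversion of a signed decomposition $\sum\eps_i R_i$ into a positive one for the shifted matrix $J\cdot u + A$, rather than the ``cover each negative primitive by positive ones'' heuristic you sketch.
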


The log-rank conjecture was first posed by Lov\'asz and Saks \cite{lovasz1988lattices} in the context of communication complexity: is the deterministic communication complexity of a Boolean matrix polynomially related to the logarithm of its rank over the reals? A closely related question had appeared even earlier in graph theory. There, the log-rank conjecture is equivalent to asking whether the logarithm of a graph's chromatic number is quasi-polynomially related to the rank of its adjacency matrix \cite{van1976bound,fajtlowicz1988conjectures,lovasz1988lattices}.

Regarding the state of the art, the best known upper bound is due to Sudakov and Tomon \cite{tomon2024matrix}, improving on previous work of Lovett \cite{lovett2016communication}, and shows that
\[ \log \p(M) \leq O\left(\sqrt{\r(M)} \right).\]
The largest known separation follows from \cite[Corollary 3]{Balodis21}, which yields a matrix $M$ satisfying $\log \p(M) \geq \Omega(\log^2 \r(M))$. 
For a more detailed overview of the log-rank conjecture and its equivalent formulations, we refer the reader to the survey by Lee and Shraibman \cite{lee2023around}, as well as the textbooks by Jukna \cite{jukna2012boolean} and Rao and Yehudayoff \cite{rao2020communication}.

In this paper, we consider a ``gradual relaxation'' from partitioning number to rank via an intermediate complexity measure: the \emph{signed rectangle rank}. This notion allows decomposition of the matrix into \textit{signed} primitive matrices. Formally, let the \emph{signed rectangle rank} of $M$, denoted by $\ur(A)$, be the minimum number $t$ such that $M= \sum_{i=1}^t \eps_i R_i$, where $\eps_i \in \{1,-1\}$ and each $R_i$ is a primitive matrix. Trivially, we have \[\r(M) \leq \ur(M) \leq \p(M).\]
A promising approach to the log-rank conjecture is to study how ``close'' is signed rectangle rank to rank and partitioning number. To resolve the conjecture, it would suffice to either prove both of the following statements or disprove one of them:
\begin{itemize}
	\item $\ur(M)$ is quasi-polynomially related to $\r(M)$, and
	\item $\ur(M)$ is quasi-polynomially relate to $\p(M)$.
\end{itemize}

The main result of the paper is that rank and signed rectangle rank are tightly related.

\begin{theorem}[Main Theorem]\label{thm:main}
	Every Boolean matrix $M$ of rank $r$ can be written as a $\pm 1$-linear-combination of at most $O(r \log r)$ primitive matrices, that is,
	\[\ur(M) \leq O(r \log r).\]
\end{theorem}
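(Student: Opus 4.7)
My plan is to factor $M$ through a basis of its column space and then produce a short signed-indicator decomposition of the resulting coefficient matrix. Specifically, I would choose $r$ linearly independent columns $c_{j_1},\ldots,c_{j_r}$ of $M$, assemble them into $B = [c_{j_1}, \ldots, c_{j_r}] \in \{0,1\}^{m\times r}$, and write $M = BL$ for some $L \in \mathbb{R}^{r\times n}$. Since each basis column is an indicator $c_{j_i} = \mathbf{1}_{S_i}$, one has $M = \sum_{i=1}^r \mathbf{1}_{S_i}\, L_{i,:}$, so the task reduces to expressing each row $L_{i,:}$ of $L$ as a signed sum of row-indicator vectors $\mathbf{1}_T$; combining these with $\mathbf{1}_{S_i}$ yields a primitive decomposition of $M$.

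To hit the $O(r\log r)$ bound, I would aim for $O(\log r)$ indicators per row of $L$. The crucial leverage is that although $L$ is a priori real-valued, the Boolean constraint $BL \in \{0,1\}^{m\times n}$ is extremely restrictive: for every row pattern $p$ realized by $B$ and every column $j$, the value $p \cdot L_{:,j}$ must lie in $\{0,1\}$. Using this constraint together with a clever basis choice (for instance, one that selects an $r\times r$ submatrix of $B$ with $\pm 1$-determinant, or one that minimizes $\|L\|_\infty$ via a greedy/pivoting rule), I would first try to upgrade $L$ to an integer matrix with entries of magnitude $\mathrm{poly}(r)$. Given such an $L$, a level-set or signed binary-expansion argument writes each row as an $O(\log r)$-term signed combination of indicators.

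The main obstacle is twofold. First, controlling the entries of $L$ uniformly is delicate: a naive basis gives coefficients as large as the reciprocal of the determinant of some $r\times r$ submatrix of $B$, and the best existing selection rules leave the entries only polynomially bounded under additional assumptions. Second, a signed binary expansion of a bounded integer vector produces \emph{weighted} indicators (with weights $2^b$), while the signed rectangle rank demands unit-signed primitives; absorbing these weights without paying $\mathrm{poly}(r)$ rather than $O(\log r)$ per row would require an amortization across the $r$ basis rows, exploiting that the weighted terms partially cancel when summed against the $\mathbf{1}_{S_i}$'s to land back in $\{0,1\}$. I expect the bulk of the author's proof to be devoted to simultaneously resolving these two difficulties, and this joint control is the step I find hardest to predict in advance.
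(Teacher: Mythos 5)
Your proposal takes a genuinely different route from the paper, and the two obstacles you flag at the end are real; as written, they are not resolved and I believe the approach cannot be pushed through to $O(r\log r)$.

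The first obstacle (bounding $\|L\|_\infty$) is more severe than you suggest. If $B$ consists of $r$ linearly independent $\{0,1\}$-columns of $M$, the entries of $L$ are, via Cramer's rule, ratios of determinants of $r\times r$ integer submatrices, and there is no basis-selection rule that guarantees $\mathrm{poly}(r)$ magnitudes for all Boolean rank-$r$ matrices: the Hadamard bound only gives $r^{O(r)}$, and the Boolean constraint $BL\in\{0,1\}^{m\times n}$ does not obviously tame this. The second obstacle is also fatal to the $O(\log r)$-per-row target: a vector with integer entries bounded by $N$ generally needs $\Theta(N)$ \emph{unit-weight} signed indicators (think of a single coordinate equal to $N$); binary expansion gives $O(\log N)$ terms but with weights $2^b$, which cannot be absorbed into unsigned primitives without blowing up the count. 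So even granting $\|L\|_\infty = \mathrm{poly}(r)$, you would land at $\mathrm{poly}(r)$ primitives per row, hence $\mathrm{poly}(r)$ total rather than $O(r\log r)$. Nothing in your sketch explains how the hoped-for amortization across rows would beat this.

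The paper avoids both difficulties by \emph{not} insisting on a basis of size $r$. It instead picks a (possibly larger) set $S$ of columns that is ``independent'' in a different, combinatorial sense: all $2^{|S|}$ subsets of $S$ have pairwise distinct entrywise column-sums. Maximality of such an $S$ forces, for any column $c\notin S$, two disjoint subsets $A,B'\subseteq S$ with $\sumc(A)=\sumc(B')+c$, which immediately yields $c=\sumc(A)-\sumc(B')$ — a $\pm1$-combination with \emph{no} coefficient control needed. The size bound $|S|=O(r\log r)$ then comes from a counting argument: the column-sums take values in $\{0,\ldots,|S|\}$ and live in a rank-$r$ space, so there are at most $(|S|+1)^r$ of them, while there must be $2^{|S|}$ distinct ones. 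In short, the paper trades a slightly larger generating set for automatically $\{-1,0,1\}$ coefficients, which is precisely the joint control your proposal identifies as the hard step but does not supply.
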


From this we immediately get an equivalent formulation of the log-rank conjecture:
\begin{conjecture}
	\label{conj:unary_vs_rank}
	For every Boolean matrix $M$, $\log \p(M) \leq (\log \ur(M))^{O(1)}$.
\end{conjecture}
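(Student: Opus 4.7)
The plan is to prove \cref{thm:main} by explicitly constructing a signed rectangle decomposition from a rank-$r$ factorization of $M$. First I would reduce to the compressed instance where $M$ has at most $2^r$ distinct rows and at most $2^r$ distinct columns, using the standard fact that an $r$-dimensional real subspace of $\R^n$ contains at most $2^r$ Boolean vectors (pick $r$ coordinates so that a basis projects invertibly). This reduction preserves $\ur$, since any signed rectangle decomposition of the compressed matrix lifts to one of $M$ by pulling back each rectangle's indicator sets along the row and column identification maps.

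The main technical step I would try to establish is the following structural lemma: for every $r$-dimensional real subspace $V \subseteq \R^n$ spanned by Boolean vectors, there exist Boolean vectors $a_1, \ldots, a_K \in \set{0,1}^n$ with $K = O(r \log r)$ such that every Boolean vector $c \in V$ is expressible as $c = \sum_l \sigma_l(c) \, a_l$ with $\sigma_l(c) \in \set{-1, 0, 1}$. Given this lemma, \cref{thm:main} follows quickly: apply it to the column space of $M$, and for each atom $a_l$ and each sign $\sigma \in \set{+, -}$ form the primitive matrix $a_l \mathbf{1}_{B_{l,\sigma}}^T$ with $B_{l,\sigma} = \set{j : \sigma_l(M_{*,j}) = \sigma}$; summing these recovers $M$ using $2K = O(r \log r)$ signed rectangles.

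To prove the lemma, my approach would be iterative refinement. Begin with any Boolean basis $b_1, \ldots, b_r$ of $V$. Each Boolean vector $c \in V$ has a unique expansion $c = \sum_i \alpha_i(c) b_i$ with rational coefficients whose denominators divide a common integer $D \leq r^{r/2}$, by Hadamard's bound applied to an $r \times r$ Boolean submatrix. The plan is to iteratively add Boolean ``correction atoms'': while some $\alpha_i(c)$ lies outside $\set{-1, 0, 1}$, identify a dominant high-order bit in the coefficient matrix $(\alpha_i(c))_{i,c}$, add a single atom $a^* \in \set{0,1}^n$ that simplifies that bit across many columns simultaneously, and re-expand every Boolean vector in the enlarged basis. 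A potential-function argument on the total bit-complexity of the coefficient matrix should yield the bound $K = O(r \log r)$.

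The main obstacle will be the simultaneous-correction step: to achieve $O(r \log r)$ atoms rather than $2^r$ (which would result from naive one-column-at-a-time correction), each newly added atom must simplify the representations of many columns at once. I expect this to follow from an averaging argument exploiting the fact that all $\leq 2^r$ Boolean vectors in $V$ lie in a common $r$-dimensional lattice, and therefore share substantial coefficient structure, so a well-chosen atom should correct a constant fraction of the global bit-complexity in a single step. Making this rigorous, likely via a double-counting or entropy-style argument on the coefficient matrix, is where the main fight will lie.
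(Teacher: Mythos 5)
Note first that \cref{conj:unary_vs_rank} is not proved in the paper --- it is a conjecture. What the paper proves is \cref{thm:main}, and the equivalence of \cref{conj:unary_vs_rank} with the log-rank conjecture then follows immediately from \cref{thm:main} together with the trivial inequality $\r(M) \leq \ur(M)$. You have correctly identified \cref{thm:main} as the substantive content to establish.

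Your structural lemma --- that the column space of a rank-$r$ Boolean matrix admits $O(r\log r)$ Boolean atoms that $\{-1,0,1\}$-generate every column --- is exactly the right target and matches what \cref{cl:IS-size} and \cref{cl:linear_combin} together deliver. But your proposed route to it is genuinely different from the paper's, and it contains a real gap. The paper never touches the rational coefficient matrix at all. It instead calls a set $S$ of columns \emph{independent} if all $2^{|S|}$ subsets of $S$ have distinct entrywise sums, observes that these subset-sums take integer values in $\{0,\ldots,|S|\}$ and lie in an $r$-dimensional space (hence are determined by $r$ coordinates), and concludes $2^{|S|} \leq (|S|+1)^r$, i.e.\ $|S| = O(r\log r)$. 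Maximality of $S$ then forces a collision among subset-sums of $S \cup \{c\}$ for any column $c$, and after cancelling the common part one gets $c = \sumc(A) - \sumc(B')$ with disjoint $A,B' \subseteq S$: a $\pm 1$-combination of columns of $S$. This is a clean, self-contained pigeonhole argument.

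Your iterative bit-complexity scheme, by contrast, leaves the crux unargued, and I see two concrete obstacles. First, once you adjoin a correction atom $a^*$, the spanning set $\{b_1,\ldots,b_r,a^*\}$ is linearly dependent, so ``re-expanding every Boolean vector'' is no longer a well-defined operation; the coefficient matrix you want to run a potential argument on is underdetermined without a choice rule, and the potential can behave badly depending on that rule. Second, and more seriously, you give no mechanism for producing a \emph{Boolean} vector $a^*$ that simultaneously reduces the high-order bits of the coefficients of many columns at once --- you flag this as ``where the main fight will lie,'' but nothing in the proposal suggests why such an atom should exist, and I do not see how to extract one from an averaging or entropy argument on the lattice. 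This is the step the paper's subset-sum device is designed to avoid. As a minor point, the initial compression to $\leq 2^r$ distinct rows and columns, while correct, is unnecessary for the paper's argument, which applies directly to the given matrix.
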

\begin{corollary}[from \cref{thm:main}] \label{corollary: main}
	The log-rank conjecture (\cref{conj:log-rank}) is equivalent to \cref{conj:unary_vs_rank}.
\end{corollary}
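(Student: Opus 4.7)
The corollary is an immediate consequence of the main theorem combined with the sandwich inequality $\r(M) \leq \ur(M) \leq \p(M)$, so the plan is just to verify both implications carefully.

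For the forward direction, I would observe that since $\r(M) \leq \ur(M)$ holds unconditionally (every rank-$1$ decomposition is in particular a signed decomposition into primitives, but more simply, the inequality is stated in the paper), we have $\log \r(M) \leq \log \ur(M)$. Thus if the log-rank conjecture holds, then $\log \p(M) \leq (\log \r(M))^{O(1)} \leq (\log \ur(M))^{O(1)}$, which is exactly Conjecture~\ref{conj:unary_vs_rank}. This direction uses nothing beyond the trivial inequality and requires no appeal to the main theorem.

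For the reverse direction, the main theorem (\cref{thm:main}) does the work: it gives $\ur(M) \leq O(\r(M) \log \r(M))$, and therefore
\[
\log \ur(M) \leq \log \r(M) + \log \log \r(M) + O(1) = O(\log \r(M)).
\]
Assuming Conjecture~\ref{conj:unary_vs_rank}, we then get $\log \p(M) \leq (\log \ur(M))^{O(1)} \leq (O(\log \r(M)))^{O(1)} = (\log \r(M))^{O(1)}$, which is the log-rank conjecture. The only subtlety is that if $\r(M) \leq 1$ the logs need to be handled by convention (e.g.\ treat $\log 0$ and $\log 1$ as bounded by a constant), but in that regime $\p(M)$ is trivially bounded as well, so the statement holds vacuously.

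I do not expect any obstacle: both implications are one-line deductions, and the entire content of the equivalence is packed into \cref{thm:main}. The proposal is therefore simply to spell out these two chains of inequalities, noting that the forward direction uses only $\r(M) \leq \ur(M)$ while the reverse direction is where \cref{thm:main} is invoked.
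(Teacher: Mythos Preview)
Your proposal is correct and matches the paper's intent: the corollary is stated without proof in the paper, treated as an immediate consequence of \cref{thm:main} together with the trivial inequality $\r(M)\le \ur(M)\le \p(M)$, exactly as you spell out. Both directions you give are the intended one-line deductions.
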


A natural question left open here is whether the bound in \cref{thm:main} can be improved, or if it is already tight:

\begin{restatable}{question}{QuestionTightness} \label{question: tightness}
	Is it true that for every Boolean matrix $M$, $\ur(M) \leq O(\r(M))$?
\end{restatable}

A result of Lindström~\cite{LinCombinatorial1965}  shows that the bound is tight for our proof technique (see the discussion at the end of \cref{sec:main-theorem}). Therefore, any improvements on the bound require different ideas.

\paragraph{Power of cancellations vs.~monotonicity.} The log-rank conjecture compares two ways of measuring the complexity of a matrix: rank and partitioning number. These measures differ in two fundamental ways:

\begin{itemize}
	\item \textbf{Combinatorial structure}: Rank expresses a matrix as a summation of rank-1 matrices with arbitrary real coefficients and no apparent combinatorial structure. In contrast, partitioning number only allows primitive matrices, which are purely \emph{combinatorial} objects.
	\item \textbf{Cancellations}: Rank allows for \emph{cancellations} in the summation. Partitioning, on the other hand,  is a \emph{monotone} model: all-1 submatrices (also known as \emph{rectangles}) must be combined positively and disjointly.
\end{itemize}

The log-rank conjecture asks whether these two measures are nevertheless quasi-polynomially related.

The measure we study, \emph{signed rectangle rank}, is an intermediate notion that allows cancellations, but only at the level of rectangles. In this way, it relaxes partitioning by permitting signed combinations of rectangles, while still restricting to simple combinatorial building blocks. Intuitively, our main theorem shows that allowing cancellation between rectangles already captures essentially all of the linear structure present in a Boolean matrix.

An analogous perspective arises from studying another measure of a matrix, the \emph{non-negative rank}. If one forbids cancellations but allows non-negative real values, without imposing any combinatorial structure on the decomposition, the resulting measure is the non-negative rank. Formally, non-negative rank of a $n \times n$ matrix $M$, denoted by $\nnr(M)$ is the minimum number $r$ such that $M=\sum_{i=1}^{r} u_i v_i^T$, where $u_i,v_i \in \R^n$ are non-negative column vectors. Trivially, we have
\[ \r(M) \leq \nnr(M) \leq \p(M). \]

Analogous to our main theorem, it is known that for Boolean matrices this relaxation of partitioning number does not increase its power too much (this follows from a result of Lovász and Saks~\cite{lovaszCommunication1993}; see e.g. Watson~\cite{WatNonnegative2016}):
\[ \log \p(M) \leq O(\log^2 \nnr(M)). \]

Consequently, the log rank conjecture is equivalent to asking whether there is a quasi-polynomial relationship between rank and non-negative rank, namely, $\log \nnr(M) \leq (\log \r(M))^{O(1)}$. Comparing this with \cref{corollary: main}, our results imply that the log-rank conjecture also reduces to converting a signed rectangle decomposition into a fully monotone one, with at most a quasi-polynomial increase in the number of primitive matrices. See \cref{fig: measures} for an illustration of the relationships between these measures.

Together, these paint a picture of what is left to understand about the log-rank conjecture. The fact that rank allows arbitrary real coefficients and partitioning number is combinatorial is a red herring: the only relevant question is whether allowing cancellations is significantly stronger than monotonicity.

\begin{figure}[ht]
	\centering
    \newcommand{\rectmargin}{0.8}
    \newcommand{\bendamount}{25}
    \newcommand{\dashedepsilon}{0.45}
	\begin{tikzpicture}[scale=1.2]
		\draw[thin, fill=blue!20!white, opacity=0.2]
		(-2, 2+\rectmargin) --
		(0+\rectmargin, 0) --
		(0, 0-\rectmargin) -- node[below, sloped, opacity=.7] {\small{monotone measures}}
		(-2-\rectmargin, 2) --
		cycle;
		
		\draw[thin, fill=blue!20!white, opacity=0.2]
		(2, 2+\rectmargin) --
		(0-\rectmargin, 0) --
		(0, 0-\rectmargin) -- node[below, sloped, opacity=0.7] {\small{combinatorial measures}}
		(2+\rectmargin, 2) --
		cycle;
		
		\node (rank) at (0, 4) {\large $\r$};
		\node (nnrank) at (-2, 2) {\large $\nnr$};
		\node (srr) at (2, 2) {\large $\ur$};
		\node (par) at (0, 0) {\large $\p$};
		
		\path[-{Latex}, black!70!white]
		(par) edge[bend right=\bendamount] (srr)
		(par) edge[bend left=\bendamount] (nnrank)
		(srr) edge[bend right=\bendamount] (rank)
		(nnrank) edge[bend left=\bendamount] (rank);
		
		\path[-{Latex}, very thick, black!70!white]
		(nnrank) edge[bend left=\bendamount] node[below, sloped] {\footnotesize{\cite{lovaszCommunication1993}}} (par);
		
		\path[-{Latex}, very thick, NavyBlue]
		(rank) edge[bend right=\bendamount] node[above, sloped] {\footnotesize{Thm 1.2}} (srr);

        \draw[dashed, thick, red]
		(-1-\dashedepsilon,3+\dashedepsilon) -- (1+\dashedepsilon,1-\dashedepsilon);
	\end{tikzpicture}
	\caption{Relationships between the matrix measures discussed; an arrow $a \rightarrow b$ indicates that $\log b \leq \log^{O(1)} a$. The thin arrows indicate bounds which are trivial from the definitions. The red dashed line indicates that proving the log-rank conjecture is equivalent to showing that \emph{any} measure below the line is quasi-polynomially related to \emph{any} measure above the line.} \label{fig: measures}
\end{figure}

\paragraph{Equivalent conjecture on cross-intersecting set systems.} 

Let $\mathcal{S}, \mathcal{T} \subseteq 2^{[d]}$ be two set families. 
We say the pair $(\mathcal{S}, \mathcal{T})$ is \emph{$L$-cross-intersecting} for some $L \subseteq \{0,\ldots,d\}$ if for all $S \in \mathcal{S}$ and $T \in \mathcal{T}$, we have $|S \cap T| \in L$; that is, the size of every pairwise intersection belongs to $L$. Cross-intersecting set families have been widely studied in combinatorics, with much of the work focusing on their extremal properties \cite{frankl1987forbidden, sgall1999, keevash2005set, hunter2024disjoint}. 

The following conjecture about $\{a,b\}$-cross-intersecting set systems was independently proposed by Lovett~\cite{lovett2021} and Singer–Sudan~\cite{singer2022}, who both observed that it is implied by the log-rank conjecture.
\begin{conjecture}\label{conj:set_systems}
	Let $\mathcal{S} = \{S_1, \ldots, S_m\}$ and $\mathcal{T} = \{T_1, \ldots, T_n\}$ be an $\{a,b\}$-cross-intersecting pair of families from $2^{[d]}$, where $a,b \in \{0, \ldots, d\}$. Then there exist subfamilies $\mathcal{A} \subseteq \mathcal{S}$ and $\mathcal{B} \subseteq \mathcal{T}$ such that $(\mathcal{A}, \mathcal{B})$ is either $\{a\}$- or $\{b\}$-cross-intersecting, and
	\[
	|\mathcal{A}|, |\mathcal{B}| \geq 2^{-\polylog(d)} \cdot |\mathcal{S}||\mathcal{T}|.
	\]
\end{conjecture}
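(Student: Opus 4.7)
The plan is to derive \cref{conj:set_systems} from the log-rank conjecture (\cref{conj:log-rank}), following the route originally observed by Lovett and by Singer--Sudan. First I would encode the intersection structure as a Boolean matrix: given an $\{a,b\}$-cross-intersecting pair $(\mathcal{S}, \mathcal{T})$ from $2^{[d]}$, define $M$ with rows indexed by $\mathcal{S}$, columns by $\mathcal{T}$, and $M[S,T] = 1$ iff $|S \cap T| = a$ (so $M[S,T] = 0$ iff $|S \cap T| = b$). The crucial first step is to bound $\r(M)$: since $|S \cap T| = \langle \chi_S, \chi_T \rangle$ with $\chi_S, \chi_T \in \{0,1\}^d$ the characteristic vectors, the intersection-size matrix $X[S,T] = |S \cap T|$ factors through $\R^d$, giving $\r(X) \leq d$. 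When $a \neq b$ (otherwise the conclusion is trivial), $M = (X - bJ)/(a-b)$ with $J$ the all-ones matrix, so $\r(M) \leq d + 1$.

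Next I would invoke the log-rank conjecture on $M$: since $\r(M) \leq d+1$, the conjecture yields $\log \p(M) \leq (\log(d+1))^{O(1)} = \polylog(d)$, so $M$ admits a partition into at most $2^{\polylog(d)}$ monochromatic combinatorial rectangles. By pigeonhole, one such rectangle $\mathcal{A} \times \mathcal{B}$, with $\mathcal{A} \subseteq \mathcal{S}$ and $\mathcal{B} \subseteq \mathcal{T}$, has area at least $|\mathcal{S}| \cdot |\mathcal{T}| \cdot 2^{-\polylog(d)}$. If this rectangle is all-$1$, then every pair $(S,T) \in \mathcal{A} \times \mathcal{B}$ satisfies $|S \cap T| = a$, so $(\mathcal{A}, \mathcal{B})$ is $\{a\}$-cross-intersecting; an all-$0$ rectangle analogously yields a $\{b\}$-cross-intersecting pair.

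The main obstacle is matching the precise size bound demanded by \cref{conj:set_systems}. A single pigeonholed rectangle directly yields the product inequality $|\mathcal{A}| \cdot |\mathcal{B}| \geq 2^{-\polylog(d)} \cdot |\mathcal{S}| \cdot |\mathcal{T}|$, which is the most natural reading of the stated bound. If instead one needs the marginal densities $|\mathcal{A}|/|\mathcal{S}|$ and $|\mathcal{B}|/|\mathcal{T}|$ to each be at least $2^{-\polylog(d)}$, then the pigeonhole step must be strengthened to a \emph{balanced} monochromatic rectangle extraction; this follows from the deterministic communication-protocol form of log-rank by standard rebalancing of the protocol tree, or by iteratively peeling off the imbalanced rectangles and recursing $\polylog(d)$ times on the leftover rows or columns. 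The degenerate cases $a=b$ and $a, b \in \{0,d\}$ are handled directly.
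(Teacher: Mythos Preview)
Your conditional derivation of \cref{conj:set_systems} from the log-rank conjecture is correct and is essentially the same argument the paper gives for this direction of \cref{thm:set_systems} (attributed there to Sgall): encode the intersection pattern as a Boolean matrix of rank at most $d+O(1)$ and extract a large monochromatic rectangle. The only cosmetic difference is that the paper invokes the Nisan--Wigderson formulation (\cref{conj:nw}) directly rather than pigeonholing over a partition; your caveat about the product bound $|\mathcal A|\,|\mathcal B|\ge 2^{-\polylog(d)}|\mathcal S|\,|\mathcal T|$ versus marginal densities is apt, as the paper's ``density $2^{-\polylog(d)}$'' is exactly the product reading.
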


As an application of \cref{thm:main}, we show that the log-rank conjecture is equivalent to this conjecture.\footnote{This equivalence was claimed in \cite{singer2022} as a parenthetical remark. However, this was a typo. The remark was meant to claim the implication from the log-rank conjecture \cite{madhu_personal_communication}.}

\begin{restatable}{theorem}{setsystems}\label{thm:set_systems}
	The log-rank conjecture (\cref{conj:log-rank}) is equivalent to \cref{conj:set_systems}.
\end{restatable}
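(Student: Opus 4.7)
The plan is to prove both directions of the equivalence.

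For the forward direction (the log-rank conjecture implies \cref{conj:set_systems}), I follow the observation already present in \cite{lovett2021,singer2022}. Given an $\{a,b\}$-cross-intersecting pair $(\mathcal{S},\mathcal{T}) \subseteq 2^{[d]}$, I associate the Boolean matrix $M_{S,T} := \mathbf{1}[|S \cap T| = a]$. The intersection-size matrix $[|S \cap T|]_{S,T}$ factors as the product of the two $\{0,1\}$-incidence matrices, and so has rank at most $d$; since $M$ is an affine function of this matrix on its two possible values $a,b$, this gives $\r(M) \leq d+1$. The log-rank conjecture then yields $\log \p(M) \leq \polylog(d)$, producing a monochromatic sub-rectangle of density $\geq 2^{-\polylog(d)}$, which corresponds to either an $\{a\}$- or a $\{b\}$-cross-intersecting subpair of the required size.

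The reverse direction is the genuinely new part and uses \cref{thm:main} as the central ingredient. Let $M$ be Boolean with $\r(M)=r$, and write $M = \sum_{i=1}^t \epsilon_i R_i$ via \cref{thm:main}, with $t = O(r \log r)$, primitive matrices $R_i = \mathbf{1}_{A_i \times B_i}$, and signs $\epsilon_i \in \{\pm 1\}$. Partition $[t] = I_+ \sqcup I_-$ by sign, and encode rows and columns as subsets of $[t]$ via
\[
S_x = \{i : x \in A_i\}, \qquad T_y = \{i \in I_+ : y \in B_i\} \cup \{i \in I_- : y \notin B_i\}.
\]
A direct computation gives
\[
|S_x \cap T_y| = M(x,y) + c(x), \qquad c(x) := |S_x \cap I_-|,
\]
so the intersection size determines $M(x,y) \in \{0,1\}$ up to the $x$-dependent shift $c(x)$. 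Pigeonholing over the $\leq t+1$ possible values of $c(x)$, I restrict to a row-subset $\mathcal{X}^*$ of density $\geq 1/(t+1)$ on which $c(x) \equiv c^*$. On this restriction the pair becomes $\{c^*, c^*+1\}$-cross-intersecting in $2^{[t]}$, and \cref{conj:set_systems} produces a monochromatic sub-rectangle of relative density $\geq 2^{-\polylog(t)}$. Combined with the pigeonhole loss $1/(t+1)$ and $t = O(r \log r)$, this yields a monochromatic rectangle in $M$ of density $\geq 2^{-\polylog(r)}$. Since every sub-matrix of $M$ has rank at most $r$, the same argument applied to every sub-matrix shows that every rank-$r$ Boolean matrix contains a monochromatic rectangle of density $\geq 2^{-\polylog(r)}$. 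By the standard equivalence between this large-monochromatic-rectangle property and the log-rank conjecture (see the survey \cite{lee2023around}), we conclude $\log \p(M) \leq \polylog(r)$.

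The main obstacle, and the reason \cref{thm:main} is essential, is that the encoding above is only useful when the universe size $t$ is polynomial in $r$: without bounding $\ur(M)$, one could only write $M$ as a $\pm 1$-sum of $\p(M)$ primitive matrices, and the polylog-in-$d$ bound in \cref{conj:set_systems} would then translate merely to polylog-in-$\p(M)$, providing no new information. The sign-aware twist in the definition of $T_y$ is what converts each negative sign $\epsilon_i=-1$ into a set complementation, yielding an unsigned intersection-size representation; the $x$-dependent shift $c(x)$ that this introduces is then cleanly absorbed by the routine pigeonhole over its $O(t)$ possible values.
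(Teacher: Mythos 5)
Your proof is correct, and both directions use the same strategy as the paper: the forward direction is Sgall's observation that an $\{a,b\}$-cross-intersecting pair over $[d]$ gives a rank-$O(d)$ Boolean matrix, and the reverse direction uses \cref{thm:main} to write a rank-$r$ Boolean matrix as a short $\pm 1$-sum of primitives and then reads off a cross-intersecting set system over a universe of size $O(r\log r)$. Where you diverge is in the encoding for the reverse direction. You set $T_y = \{i \in I_+ : y \in B_i\} \cup \{i \in I_- : y \notin B_i\}$, which converts each negative sign into a column-side complement at the cost of a row-dependent shift $c(x) = |S_x \cap I_-|$, and you then pigeonhole over the $O(t)$ values of $c(x)$ to recover a genuine $\{c^*, c^*+1\}$-cross-intersecting pair. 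The paper avoids this extra step by also complementing on the row side: for $\eps_i = -1$ it replaces $-R_i$ by the two primitives supported on $([m]\setminus A_i)\times[n]$ and $A_i\times([n]\setminus B_i)$ (and for $\eps_i = +1$ it pads with the all-ones rectangle $J$), so that the total shift is the constant $u$ rather than $c(x)$, and the matrix $A' = A + uJ$ is directly a $\{u,u+1\}$-cross-intersecting system over a universe of size $2u$ with no pigeonholing. Both reductions are valid and give the same quasi-polynomial parameters; the paper's symmetrized construction is a bit cleaner and also makes explicit, via the remark, that the log-rank conjecture is equivalent even to the special case of $\{k,k+1\}$-cross-intersecting families.
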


\paragraph{Tensors.} In addition to the matrix case we extend \cref{thm:main} to Boolean tensors, showing analogous relation between the tensor rank and signed rectangle rank of a tensor. This has implications for the generalization of the log-rank conjecture to \emph{number-in-hand} communication complexity (asked explicitly in \cite{DKWPartition2011}).

\paragraph{Organization.} We give the proof of \cref{thm:main} in \cref{sec:main-theorem} and discuss the open question of how tight the bound is.  In \cref{sec:set_systems}, we prove the equivalence of log-rank conjecture to \cref{conj:set_systems}. Finally, prove the extension of the main theorem to tensors in \cref{sec:tensors}.

\section{Proof of the Main Theorem} \label{sec:main-theorem}
Let $M=(m_{i,j})$ be an $m \times n$ Boolean matrix of rank $r$.
For a subset $S$ of the columns, define its \emph{column-sum} to be the column vector $c$ such that for $i \in [m]$, $c_i = \sum_{j \in S} m_{i,j}$. 
That is, $c$ is the entrywise sum of all the columns of $S$ along all the rows. Denote the column-sum of $S$ as $\sumc(S)$. Call a subset $S$ of columns \emph{independent} if all the subsets of $S$ have distinct column-sums.

\begin{claim} \label{cl:IS-size}
    Let $S$ be an independent set of columns of $M$. Then $|S| \leq O(r \log r)$, where $r$ is the rank of $M$.
\end{claim}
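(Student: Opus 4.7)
The plan is to translate the combinatorial independence condition into a counting statement in the column span of $M$. For any $T \subseteq S$ the vector $\sumc(T) = \sum_{j \in T} M_{\cdot,j}$ is an integer combination of columns of $M$, so it lies in the column span $V \subseteq \R^m$. Moreover, since $M$ is Boolean, each coordinate of $\sumc(T)$ lies in $\{0, 1, \ldots, |S|\}$. Thus every subset column-sum is a point of $V \cap \{0, 1, \ldots, |S|\}^m$, and by the independence hypothesis the $2^{|S|}$ such points are all distinct.

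Next I would bound $|V \cap \{0, 1, \ldots, |S|\}^m|$ using only $\dim V = r$. Since $V$ is an $r$-dimensional subspace of $\R^m$, there exist $r$ row-indices $i_1, \ldots, i_r \in [m]$ such that the coordinate projection $\pi \colon V \to \R^r$ onto those coordinates is injective (pick any basis of $V$, arrange it as the rows of an $r \times m$ matrix of rank $r$, and take any $r$ columns spanning an invertible $r \times r$ submatrix). Consequently, any point of $V$ with all coordinates in $\{0, 1, \ldots, |S|\}$ is uniquely determined by its $r$ values at $i_1, \ldots, i_r$, giving at most $(|S|+1)^r$ such points. Combining with the previous paragraph,
\[
2^{|S|} \leq (|S|+1)^r.
\]

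Taking logarithms yields the self-referential bound $|S| \leq r \log_2(|S|+1)$, which by a short calculation gives $|S| = O(r \log r)$: writing $|S| = K r \log r$, the right-hand side is $r(\log_2 r + \log_2\log r + \log_2 K + O(1))$, so $K$ must be bounded by an absolute constant. The only step that even requires an argument is the existence of $r$ coordinates on which $V$ projects injectively, and that is a standard linear-algebra fact; I do not anticipate any genuine obstacle. In effect, the whole claim reduces to the elementary observation that an $r$-dimensional real subspace of $\R^m$ contains at most $(k+1)^r$ points with coordinates in $\{0, \ldots, k\}$, applied with $k = |S|$.
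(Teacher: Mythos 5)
Your proof is correct and takes essentially the same approach as the paper: both bound the number of distinct column-sums by $(|S|+1)^r$ using the fact that these sums live in a rank-$r$ space and are determined by $r$ coordinates, then solve $2^{|S|} \leq (|S|+1)^r$. The only cosmetic difference is that the paper packages the column-sums into an explicit matrix $A_S$ of rank at most $r$, whereas you work directly with the column span $V$ and a coordinate projection; the underlying linear algebra is identical.
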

\begin{proof}
	 Let $A_S$ be the matrix whose columns are the column-sums of $S$, so $A_{S}$ has $2^{|S|}$ columns. The entries of $A_S$ are in $\{0,\ldots,|S|\}$ as they are sums of at most $|S|$ entries of $M$, which take value $\{0, 1\}$. The rank of $A_S$ is at most $r$ because its columns are in the span of the columns of $M$, which itself has rank $r$. Thus, there is a set of rows $R$ with $|R| = r$ such that, for any column $c$ of $A_S$, the values $(c_i)_{i \in R}$ are sufficient to determine every entry of $c$. Therefore, there are at most $(|S|+1)^r$  unique columns of $A_S$, and since every column of $A_S$ is unique, we have that the subsets of $S$ have at most $(|S|+1)^r$ column-sums.  Therefore, $2^{|S|} \leq (|S|+1)^r$, which implies $|S| \leq O(r \log r)$.
\end{proof}

\begin{claim}\label{cl:linear_combin}
    Let $S$ be a maximal independent set of columns of $M$.  Then every column of $M$ can be expressed as a  $\pm 1$-linear combination of columns in $S$.
\end{claim}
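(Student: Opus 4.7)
The plan is to use the maximality of $S$ as an independent set in a manner analogous to how a maximal linearly independent subset of a vector space is automatically spanning, but adapted to the $\pm 1$-coefficient regime. The key preliminary observation is that independence of $S$ admits a clean reformulation: two distinct subsets $A, B \subseteq S$ have equal column-sums if and only if $\sumc(A \setminus B) - \sumc(B \setminus A) = 0$, which is a nontrivial signed relation on the columns of $S$ with coefficients in $\{-1, 0, +1\}$. Conversely, any nonzero signed relation of this form corresponds to a pair of distinct subsets with equal column-sum (via the positive and negative supports of the coefficient vector). Therefore $S$ is independent precisely when no nonzero vector $v \in \{-1, 0, +1\}^{|S|}$ satisfies $\sum_{s \in S} v_s M_s = 0$.

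With this reformulation in hand, I would fix an arbitrary column $c$ of $M$. If $c$ is already one of the columns indexed by $S$, the claim is trivial, so assume otherwise. The maximality of $S$ forces $S \cup \{c\}$ to fail independence, which by the above yields coefficients $\alpha, \{\alpha_s\}_{s \in S} \in \{-1, 0, +1\}$, not all zero, such that $\alpha \cdot c + \sum_{s \in S} \alpha_s M_s = 0$. The coefficient $\alpha$ cannot be zero, for otherwise the remaining identity would be a nontrivial signed relation among columns of $S$ and contradict its independence. Since $\alpha \in \{-1, +1\}$, multiplying through by $\alpha$ expresses $c$ as a $\pm 1$-linear combination of columns in $S$, as required.

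The only step that requires a bit of bookkeeping is the signed-relation reformulation of independence --- one must verify that passing between pairs of distinct subsets and nonzero $\{-1, 0, +1\}$-valued coefficient vectors is legitimate, and in particular that the coefficient $\alpha$ attached to $c$ comes out nonzero. No substantial obstacle is anticipated: the overall structure mirrors the textbook argument that a maximal linearly independent set spans, here transported to integer combinations with restricted $\{-1, 0, +1\}$ coefficients.
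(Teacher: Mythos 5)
Your proposal is correct and follows essentially the same route as the paper: both exploit maximality to obtain two subsets of $S \cup \{c\}$ with equal column-sums, reduce to disjoint supports, observe $c$ must participate, and read off the $\pm 1$-combination. Your "signed relation" reformulation is just a repackaging of the paper's disjointing step into a standalone lemma, so there is no substantive difference in the argument.
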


\begin{proof}
    This trivially holds for the columns in $S$. 
 Fix a column $c \notin S$. Since $S$ is a maximal independent set, $S \cup \{c\}$ is not independent. This means that there are two subsets $A$ and $B$ of $S \cup \{c\}$ with the same column-sum. We can assume that these subsets are disjoint; if they are not, then removing the common columns still results in equal column-sums. Note $A$ and $B$ cannot both exclude $c$, as this would contradict $S$ being independent. Assume $c \in B$, and let $B' = B\setminus \{c\}$. Combining this with $\sumc(A)=\sumc(B)$, we get $\sumc(A) = \sumc(B') + c$. Noting that $A \cap B' = \emptyset$ and $A,B' \subseteq S$ concludes that $c=\sumc(A)-\sumc(B')$ is the desired linear combination.
\end{proof}

Combining these two claims concludes the proof of \cref{thm:main} as follows. Let $S$ be the largest independent set of columns of $M$. By \cref{cl:linear_combin} every column $y$ can be expressed as a $\pm 1$-linear combination of  columns in $S$. Thus, $M$ can be written as

\[ M(x, y) = \sum_{c \in S} \alpha_{c}(y) c(x), \]
where each coefficient $\alpha_{c}(y)\in \{-1, 0, 1\}$. Decompose $\alpha_{c}(y) = \alpha_{c}^{+}(y) - \alpha_{c}^{-}(y)$, where $\alpha_{c}^{+}(y), \alpha_{c}^{-}(y) \in \{0,1\}$.
For each column $c \in S$, define, 
\[ R_c^{+}(x, y) = \alpha_{c}^{+}(y) c(x) \qquad \mbox{and} \qquad R_c^{-}(x, y) = \alpha_{c}^{-}(y) c(x) \]

Each of $R^{+}_c$ and $R^{-}_c$ is either an all-zeroes matrix or a primitive matrix, since they are outer products of $\{0, 1\}$-valued vectors. Hence, $M = \sum_{c \in S} (R_c^{+}- R_c^{-})$, which expresses $M$ as $\pm 1$-sum of at most $2|S|$ primitive matrices. Finally, applying the bound on $|S|$ from \cref{cl:IS-size}, we conclude that $\ur(M) \leq O(r \log r)$.

We remind the reader of the following open question:

\QuestionTightness*

An approach one might hope to take is to improve the statement of \cref{cl:IS-size}, that is, to prove a better upper bound on the size of an independent set of columns in a low-rank Boolean matrix. However, a result of Lindström~\cite[Theorem 1]{LinCombinatorial1965} shows that \cref{cl:IS-size} is tight. Concretely, Lindström gives a construction of a set of $\Theta(r \log r)$ vectors in $\{0,1\}^r$ which are independent (that is, all their column sums are distinct).

\section{Equivalence to the cross-intersecting set systems conjecture}\label{sec:set_systems}
 In order to prove \cref{thm:set_systems}, we will use the equivalent version of the log-rank conjecture by Nisan and Wigderson \cite{NW}. A submatrix of a matrix is called a \emph{monochromatic rectangle} if all of its entries have the same value.
\begin{conjecture}[\cite{NW}]\label{conj:nw}
    Any Boolean matrix $M$ has a monochromatic rectangle of density $2^{-\polylog(\r(M))}$.
\end{conjecture}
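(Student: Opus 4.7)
My plan is to route both implications through the Nisan--Wigderson reformulation (\cref{conj:nw}), using \cref{thm:main} crucially in the harder direction.

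For the easy direction $\cref{conj:nw}\Rightarrow\cref{conj:set_systems}$, given an $\{a,b\}$-cross-intersecting pair $(\mathcal{S},\mathcal{T})\subseteq 2^{[d]}$ with $a\ne b$, I will form the Boolean matrix $M$ indexed by $\mathcal{S}\times\mathcal{T}$ with $M_{S,T}=1$ iff $|S\cap T|=a$. Writing $|S\cap T|=\langle x_S,y_T\rangle$ for characteristic vectors in $\{0,1\}^d$ yields $M_{S,T}=(\langle x_S,y_T\rangle - b)/(a-b)$, exhibiting $M$ as a linear function of an inner product in $\R^d$ plus a rank-$1$ shift, so $\r(M)\le d+1$. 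Applying \cref{conj:nw} produces a monochromatic rectangle of density $2^{-\polylog(d)}$ in $M$, which translates directly into subfamilies $\mathcal{A}\subseteq\mathcal{S}$, $\mathcal{B}\subseteq\mathcal{T}$ that are $\{a\}$- or $\{b\}$-cross-intersecting of the required size.

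For the hard direction $\cref{conj:set_systems}\Rightarrow\cref{conj:nw}$, given $M$ with $\r(M)=r$ (WLOG with all rows and columns distinct), I will first invoke \cref{thm:main} to write $M=\sum_{k=1}^{t}\eps_k \mathbf{1}_{A_k\times B_k}$ with $t=O(r\log r)$ and $\eps_k\in\{\pm 1\}$. For each row $i$ define $u_i\in\{0,1\}^t$ by $(u_i)_k=\mathbf{1}[i\in A_k]$, and for each column $j$ define $\tilde v_j\in\{0,1\}^t$ to equal $\mathbf{1}[j\in B_k]$ at coordinates with $\eps_k=+1$ and $1-\mathbf{1}[j\in B_k]$ at coordinates with $\eps_k=-1$. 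A direct computation gives $\langle u_i,\tilde v_j\rangle = M_{i,j}+W_i$, where $W_i \defeq |\{k:\eps_k=-1,\,i\in A_k\}|$ depends only on the row. To absorb this row-dependent shift, I pad $u_i$ with $W^*-W_i$ ones followed by $W_i$ zeros, and $\tilde v_j$ with $W^*$ ones, where $W^* \defeq \max_i W_i \le t$. The padded vectors $\hat u_i,\hat v_j\in\{0,1\}^d$ with $d=t+W^*=O(r\log r)$ satisfy $\langle \hat u_i,\hat v_j\rangle = M_{i,j}+W^*\in\{W^*,W^*+1\}$. The corresponding set families form a $\{W^*,W^*+1\}$-cross-intersecting pair in $2^{[d]}$; applying \cref{conj:set_systems} yields subfamilies whose pairwise intersection is constantly $W^*$ or $W^*+1$, i.e.\ a monochromatic rectangle in $M$ of density $2^{-\polylog(d)}=2^{-\polylog(r)}$, establishing \cref{conj:nw}.

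The main obstacle is the hard direction: turning a signed decomposition of a low-rank Boolean matrix into an honest $\{0,1\}$-set representation with only polylogarithmic blow-up in dimension. The sign-flip trick converts the $\pm 1$ weights into a row-dependent additive shift, and the padding eliminates that dependence; thanks to \cref{thm:main}, the whole procedure uses only $O(r\log r)$ coordinates, so $\polylog(d)=\polylog(r)$ exactly matches the density required by \cref{conj:nw}. Distinctness of the resulting vectors follows from the WLOG assumption on $M$, so the families are genuine set systems and no multiset bookkeeping is required.
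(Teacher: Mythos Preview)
Your proposal is correct and follows the same overall architecture as the paper's proof of \cref{thm:set_systems}: the easy direction encodes the cross-intersecting pair as a Boolean matrix of rank $\le d+1$ and applies \cref{conj:nw}, and the hard direction invokes \cref{thm:main} to obtain a signed-rectangle decomposition of size $t=O(r\log r)$ and then converts it into a $\{k,k+1\}$-cross-intersecting system over a ground set of size $O(t)$.

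The one genuine difference is in how the signs are absorbed. The paper replaces every term $\eps_k R_k$ uniformly by $J+\eps_k R_k$, which is always a sum of two primitive matrices; summing over $k$ shifts every entry by exactly $u$, yielding a $\{u,u+1\}$-system over $[2u]$ with no row-dependence to repair. Your route instead complements the $B_k$-coordinate whenever $\eps_k=-1$, which introduces the row-dependent offset $W_i$, and then pads by $W^*$ extra coordinates to cancel it. Both end with $d\le 2t$, so neither gains asymptotically; the paper's trick is a bit cleaner and avoids the padding step, while yours is a perfectly valid alternative and your verification that $\langle \hat u_i,\hat v_j\rangle=M_{i,j}+W^*$ is correct. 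Your WLOG reduction to distinct rows and columns is also fine, since removing duplicates preserves rank and any monochromatic rectangle lifts back with at least the same density.
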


    Sgall~\cite{sgall1999} considered \cref{conj:set_systems} in the case of 
$\{a,a+1\}$-cross-intersecting set systems and noted that it would follow from the log-rank conjecture. 
    Sgall noted that any $\{a,b\}$-cross-intersecting set pair of families $(\mathcal{S}, \mathcal{T})$ from $2^{[d]}$ can be represented as an $m \times n$ matrix $M_{\mathcal{S}, \mathcal{T}}$ over $\{a,b\}$ with rank at most $d$, where $M_{\mathcal{S}, \mathcal{T}}[i,j] := |S_i \cap T_j|$. One can write $M_{\mathcal{S}, \mathcal{T}}$ as a sum of $d$ primitive matrices $R_k$ indexed by elements of $[d]$, where $R_k[i,j] = 1$ iff $k \in S_i \cap T_j$. Each $R_k$ has rank $1$, so the total rank of $M_{\mathcal{S}, \mathcal{T}}$ is at most $d$. Conversely, any $\{a,b\}$-valued matrix that is a sum of $d$ primitive matrices corresponds to an $\{a,b\}$-cross-intersecting set system over a universe of size $d$.

\setsystems*
    
    \begin{proof}
        As mentioned above, the fact that \cref{conj:set_systems} is implied by the log-rank conjecture was proved by Sgall~\cite{sgall1999}. We include the proof here for completeness. For an $\{a,b\}$-cross-intersecting family pair $(\mathcal{S}, \mathcal{T})$ with $a \leq b$, we can write $M_{\mathcal{S}, \mathcal{T}} = (b-a) B + a J$, where $B$ is  some Boolean matrix and $J$ is the all-ones matrix. Then, $\r(B) - 1 \leq \r(M_{\mathcal{S}, \mathcal{T}}) \leq \r(B) + 1$. By \cref{conj:nw}, $B$ has a monochromatic rectangle of density $2^{-{\polylog}(\r(B))}$, which yields a monochromatic rectangle in $M_{\mathcal{S}, \mathcal{T}}$ of density $2^{-{\polylog}(\r(M))} \geq 2^{-\polylog(d)}$ from the discussion above.

         For the reverse direction, assume \cref{conj:set_systems} holds. Let $A$ be a Boolean $m \times n$ matrix with signed rectangle rank $u$. By \cref{thm:main} and \cref{conj:nw}, it suffices to find a monochromatic rectangle in $A$ of density at least $2^{-\polylog(u)}$.  We reduce this to the problem of finding a large $\{a\}$- or $\{b\}$-cross-intersecting subfamily in an $\{a,b\}$-cross-intersecting set system. Let $a,b$ be integers chosen later. Define a matrix $A' \in \{a,b\}^{m \times n}$ as follows:
    \[
        A'[i,j] := \begin{cases}
            a & \text{if } A[i,j] = 1, \\
            b & \text{if } A[i,j] = 0.
        \end{cases}
    \]
        
           Our goal is to show that $A'$ corresponds to an $\{a,b\}$-cross-intersecting set system over a universe of size $d = \Theta(u)$. 

            Let $A= \sum_{i=1}^u \eps_i R_i$ be the signed rectangle rank decomposition for $A$. We now construct a set of $2u$ primitive matrices $\{R'_i\}$ such that $A' = \sum_{i=1}^{2u} R'_i$, by replacing each $R_i$ with a pair of primitive matrices $R'_{2i-1}, R'_{2i}$ depending on the sign of $\eps_i$. 

            \begin{itemize}
                \item If $\eps_i = 1$, let $R'_{2i-1}=J$  (the all-ones matrix), and $R'_{2i} = R_i$. Then, $R'_{2i-1} + R'_{2i} = J + R_i$
                \item If \(\varepsilon_i = -1\), let \(A_i \subseteq [m]\), \(B_i \subseteq [n]\) be such that \(R_i = 1\) on \(A_i \times B_i\) and zero elsewhere.  
                Define \(R'_{2i-1}\) to be 1 on \(([m]\setminus A_i) \times [n]\) and 0 elsewhere, and \(R'_{2i}\) to be 1 on \(A_i \times ([n]\setminus B_i)\). Then \(R'_{2i-1} + R'_{2i} = J - R_i\). 
            \end{itemize}
               
                In either case, the pair \((R'_{2i-1}, R'_{2i})\) replaces \(\varepsilon_i R_i\) by \(J + \varepsilon_i R_i\). Thus, each entry of $A'$ satisfies $A'[i, j] = A[i, j] + u$, and therefore $A'$ corresponds to a $\{u, u+1\}$-cross-intersecting set family over $[d]$ for $d = 2u$. Applying~\cref{conj:set_systems}, we obtain a large monochromatic rectangle in $A'$, and hence in $A$, of density at least $2^{-\polylog(u)}$. 
    \end{proof}

\begin{remark}
    The above proof shows that the log-rank conjecture is equivalent to a special case of~\cref{conj:set_systems} for $\{a, a+1\}$-cross-intersecting set systems.
\end{remark}

 \section{Generalizing the Main Theorem to tensors}\label{sec:tensors}
The log-rank conjecture extends naturally to Boolean tensors. Motivated by this perspective, we show that \cref{thm:main} continues to hold for the tensor rank of constant-order Boolean tensors, giving an equivalent formulation of the log-rank conjecture for tensors.

Let $T$ be an order-$\ell$ Boolean tensor: a multilinear map $[n_1] \times \ldots \times [n_\ell] \to \{0, 1\}$ (for some natural numbers $n_1, \ldots, n_\ell$). In other words, $T$ can be expressed as an $\ell$-dimensional array whose entries are all in $\{0, 1\}$. 
 
 We say $T$ has tensor rank $1$ if there are maps $v_i :[n_i] \to \R$ for $i \in [\ell]$ such that 
\[ T(x_1, \ldots, x_\ell) = v_1(x_1) \cdot \ldots \cdot v_{\ell}(x_{\ell}). \]
The \emph{tensor rank} of a tensor $T$, denoted by $\tr(T)$ is the minimum number $r$ such that $T$ can be expressed as the sum of $r$ rank-1 tensors. 

A \emph{primitive tensor} is the natural generalization of a primitive matrix: it is a rank-1 tensor where $v_i$'s map to $\{0, 1\}$. That is, a primitive tensor can be expressed as a multidimensional array which is all-1 on some product set $Q_1 \times \ldots \times Q_\ell$ for $Q_i \subseteq [n_i]$ and is all-0 elsewhere. Having this, the \emph{partitioning number} and \emph{signed rectangle rank} of a tensor are defined analogously as the minimum integers $p$ and $r$, respectively, such that $T=\sum_{i=1}^p S_i$ and $T=\sum_{i=1}^r \eps_{i} S_i$, where $S_i$ are primitive tensors and $\eps_i \in \{\pm1 \}$. 

The log-rank conjecture for tensors can be formulated as follows.
\begin{conjecture}[Log-rank conjecture for tensors]\label{conj:log-rank-tensors}
    For any constant-order Boolean tensor $T$, \[\log \p(T) \leq (\log \tr(T))^{O(1)}.\]
\end{conjecture}

In the communication complexity language, the log-rank conjecture for tensors states that the \emph{multiparty number-in-hand communication complexity} of a tensor $T$ is polynomially related to the logarithm of the tensor rank of $T$ (see \cite{DKWPartition2011}). Note, it is known that the log-rank conjecture does not hold for tensors of super-polynomial order \cite{li2011applications}.

We prove \cref{thm:main} generalizes to tensors.

\begin{theorem} \label{thm:tensors}
	Let $T: [n_1] \times \ldots \times [n_\ell] \to \{0, 1\}$ be an order-$\ell$ Boolean tensor with $\ell \geq 2$. If the tensor rank of $T$ is $r$, then $T$ can be written as a $\pm 1$-linear-combination of at most $(c r \log r)^{(\ell - 1)}$ primitive tensors, where $c$ is an absolute constant.
\end{theorem}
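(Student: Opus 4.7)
My plan is to induct on $\ell$. The base case $\ell = 2$ is exactly \cref{thm:main}: an order-$2$ Boolean tensor (a matrix) of rank $r$ is a $\pm 1$-combination of $O(r \log r)$ primitive matrices, which is bounded by $cr \log r = (cr \log r)^{\ell - 1}$ for a sufficiently large absolute constant $c$.

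For the inductive step, I would think of $T$ as the list of its mode-$\ell$ slices $T_1, \ldots, T_{n_\ell}$, where $T_k(x_1, \ldots, x_{\ell - 1}) = T(x_1, \ldots, x_{\ell - 1}, k)$. Expanding any tensor rank decomposition $T = \sum_{i = 1}^r v_1^{(i)} \otimes \cdots \otimes v_\ell^{(i)}$ gives $T_k = \sum_{i = 1}^r v_\ell^{(i)}(k) \cdot v_1^{(i)} \otimes \cdots \otimes v_{\ell - 1}^{(i)}$, which shows that every slice has tensor rank at most $r$ and that all $n_\ell$ slices lie in a common $r$-dimensional subspace $V \subseteq \R^{n_1 \times \cdots \times n_{\ell - 1}}$ of order-$(\ell - 1)$ tensors.

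Next, I would define an \emph{independent set} of slice-indices $S \subseteq [n_\ell]$ as one whose subsets all produce distinct slice-sums $\sum_{k \in A} T_k$. The matrix arguments then carry over verbatim. Mirroring \cref{cl:IS-size}: since all slice-sums lie in $V$, there exist $r$ index tuples in $[n_1] \times \cdots \times [n_{\ell - 1}]$ such that any element of $V$ is determined by its values at those tuples (any $r$-dimensional subspace admits a faithful projection onto $r$ coordinates). Each slice-sum takes values in $\{0, 1, \ldots, |S|\}$, so $2^{|S|} \leq (|S| + 1)^r$ and hence $|S| \leq Cr \log r$ for an absolute constant $C$. Mirroring \cref{cl:linear_combin}: for a maximal independent set $S$, every slice is a $\{-1, 0, 1\}$-combination of the slices indexed by $S$. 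Splitting the coefficients as $\alpha_s^+(k) - \alpha_s^-(k)$ with $\alpha_s^\pm(k) \in \{0, 1\}$ yields
\[
T = \sum_{s \in S} T_s \otimes \alpha_s^+ - \sum_{s \in S} T_s \otimes \alpha_s^-,
\]
a $\pm 1$-combination of $2|S|$ tensors of the form $T_s \otimes \beta$ with $\beta \in \{0, 1\}^{n_\ell}$.

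Finally, I would apply the inductive hypothesis to each $T_s$, which is an order-$(\ell - 1)$ Boolean tensor of tensor rank at most $r$: it decomposes into a $\pm 1$-combination of at most $(cr \log r)^{\ell - 2}$ primitive order-$(\ell - 1)$ tensors. The tensor product of a primitive order-$(\ell - 1)$ tensor supported on $Q_1 \times \cdots \times Q_{\ell - 1}$ with the indicator of $Q_\ell \subseteq [n_\ell]$ is a primitive order-$\ell$ tensor supported on $Q_1 \times \cdots \times Q_\ell$, so each $T_s \otimes \alpha_s^\pm$ becomes a $\pm 1$-combination of $(cr \log r)^{\ell - 2}$ primitive order-$\ell$ tensors. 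The total count is at most $2|S| \cdot (cr \log r)^{\ell - 2} \leq (2C/c) \cdot (cr \log r)^{\ell - 1}$, which is at most $(cr \log r)^{\ell - 1}$ as soon as $c \geq 2C$. I do not expect a substantive obstacle: the only tensor-specific ingredients, namely that slices have tensor rank $\leq r$ and lie in a common $r$-dimensional subspace, follow directly from the tensor rank decomposition; the only care required is in choosing $c$ uniformly large enough to absorb both $C$ and the constant from the base case so that the induction closes.
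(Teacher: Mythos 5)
Your proof is correct and follows the same overall strategy as the paper's: express $T$ as a $\pm 1$-combination of a maximal independent set of slices, bound the size of that set by $O(r\log r)$ via a rank/counting argument, and close by induction with a uniform choice of the constant $c$. The only minor presentational difference is that the paper bounds $|S|$ by explicitly flattening $T$ to a matrix $M_T$ with at most $n_\lambda$ columns and invoking \cref{cl:IS-size}, whereas you redo the counting argument directly on the $r$-dimensional subspace containing the slices; these are the same in substance, and you are in fact a bit more explicit than the paper about why each slice has tensor rank at most $r$ and about the factor of $2$ from splitting $\alpha_s = \alpha_s^+ - \alpha_s^-$.
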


A direct corollary of this theorem is an equivalent formulation of the log-rank conjecture for tensors, analogous to \cref{conj:unary_vs_rank} for matrices.

\begin{conjecture}
\label{conj:unary_vs_rank_tensors}
      Every constant-order Boolean tensor $T$ satisfies $\log \p(T) \leq (\log \ur(T))^{O(1)}$.
\end{conjecture}

Finally, let us set up the proof of \cref{thm:tensors}. A \emph{slice} of an order-$\ell$ tensor is the order-$(\ell-1)$ tensor obtained by taking a coordinate and setting it to a fixed value. For $\lambda \in [\ell]$, a \emph{$\lambda$-slice} is a slice where we specify that $\lambda$ is the coordinate to be fixed. Similar to the proof of \cref{thm:main}, we define a slice-sum to be the entrywise sum of slices, and an independent set of slices is one where all of its subsets have distinct slice-sums.

\begin{claim} \label{cl:IS-tensors}
    Let $S$ be an independent set of $\lambda$-slices of $T$. Then $|S| \leq O(r \log r)$, where $r$ is the tensor rank of $T$.
\end{claim}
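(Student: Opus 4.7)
The plan is to mirror the argument of \cref{cl:IS-size} essentially verbatim, after identifying the right linear-algebraic object. Fix a minimum-rank decomposition $T = \sum_{i=1}^r u_1^i \otimes \cdots \otimes u_\ell^i$, and for each $i \in [r]$ let $W^i \defeq u_1^i \otimes \cdots \otimes u_{\lambda-1}^i \otimes u_{\lambda+1}^i \otimes \cdots \otimes u_\ell^i$ be the $(\ell-1)$-tensor obtained by omitting the $\lambda$-th factor. By multilinearity, the $\lambda$-slice of $T$ at position $a \in [n_\lambda]$ equals $\sum_{i=1}^r u_\lambda^i(a) \cdot W^i$. Hence every $\lambda$-slice of $T$ --- and therefore every slice-sum of a subset of $S$ --- lies in the $r$-dimensional $\R$-span of $W^1, \ldots, W^r$ inside the space of $(\ell-1)$-tensors.

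Next, I would flatten each $(\ell-1)$-tensor into a vector of length $N \defeq \prod_{j \neq \lambda} n_j$, and form the matrix $B_S$ whose columns are the flattened slice-sums $\sumc(S')$ as $S'$ ranges over subsets of $S$. By the previous paragraph, the column space of $B_S$ is contained in an $r$-dimensional subspace of $\R^N$, so there is a set of $r$ row indices whose values on any column determine the rest of that column. Since each entry of $B_S$ is a sum of at most $|S|$ Boolean values, each entry lies in $\{0,1,\ldots,|S|\}$, and therefore the number of distinct columns of $B_S$ is at most $(|S|+1)^r$.

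Finally, independence of $S$ forces the $2^{|S|}$ columns of $B_S$ to be pairwise distinct, giving $2^{|S|} \leq (|S|+1)^r$; taking logarithms yields $|S| \leq O(r \log r)$, exactly as in the matrix case. I do not anticipate a substantive obstacle: the tensor-rank decomposition plays the role of the column basis from the matrix argument, and the ``flatten and count'' step is purely syntactic. The only point requiring mild care is the opening observation that every $\lambda$-slice lies in a fixed $r$-dimensional subspace of $(\ell-1)$-tensors; this uses only the rank-$r$ decomposition together with multilinearity of $\otimes$, and makes no use of $\ell$ beyond $\ell \geq 2$.
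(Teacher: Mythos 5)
Your proof is correct and follows essentially the same route as the paper: both hinge on flattening and the observation that the flattening rank is at most the tensor rank $r$. The only cosmetic difference is that the paper flattens $T$ directly to the matrix $M_T$ whose columns are the $\lambda$-slices and invokes \cref{cl:IS-size} as a black box, whereas you inline the counting argument by re-deriving it for the matrix $B_S$ of slice-sums.
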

\begin{proof}
    Let $m = \prod_{i \ne \lambda} n_i$.
	Consider the flattening of $T$ to a matrix $M_T \in \{0, 1\}^{m \times n_{\lambda}}$. Then $M_T$ has rank at most $r$ and, following the bijection between $\lambda$-slices of $T$ and columns of $M_T$, there is an independent set of columns $S'$ in $M_T$ with $|S'| = |S|$. Apply \cref{cl:IS-size} to obtain $|S'| \leq O(r \log r)$.
\end{proof}

\begin{claim}\label{cl:tensor_linear_combin}
    Let $S$ be a maximal independent set of $\lambda$-slices of $T$.  Then every $\lambda$-slice of $T$ can be expressed as a  $\pm 1$-linear combination of  $\lambda$-slices in $S$.
\end{claim}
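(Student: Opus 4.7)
The plan is to reduce this statement to its matrix analog, \cref{cl:linear_combin}, by applying the same flattening trick already used in the proof of \cref{cl:IS-tensors}. Let $m = \prod_{i \ne \lambda} n_i$ and let $M_T \in \{0,1\}^{m \times n_\lambda}$ be the flattening of $T$ in which the $\lambda$-coordinate indexes columns and the remaining coordinates are combined into a single row index. Under this flattening, $\lambda$-slices of $T$ correspond bijectively to columns of $M_T$, and entrywise sums of $\lambda$-slices correspond, coordinatewise, to entrywise sums of the corresponding columns. Consequently, a set of $\lambda$-slices is independent (in the sense defined for tensors) if and only if the corresponding set of columns of $M_T$ is independent in the sense of \cref{cl:IS-size}, and maximality is preserved under the bijection.

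Let $S'$ denote the image of $S$ under this bijection. Then $S'$ is a maximal independent set of columns of $M_T$, so \cref{cl:linear_combin} applies: every column of $M_T$ can be written as a $\pm 1$-linear combination of columns in $S'$. Translating back via the bijection yields that every $\lambda$-slice of $T$ can be written as a $\pm 1$-linear combination of $\lambda$-slices in $S$, which is exactly the desired conclusion.

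Alternatively, one may simply repeat the argument of \cref{cl:linear_combin} verbatim at the level of slices: given a $\lambda$-slice $s \notin S$, the maximality of $S$ forces $S \cup \{s\}$ to fail independence, producing two disjoint subsets $A, B \subseteq S \cup \{s\}$ with equal slice-sums. Since $S$ on its own is independent, $s$ must lie in $A \cup B$, and without loss of generality $s \in B$. Setting $B' = B \setminus \{s\}$ and rearranging $\sumc(A) = \sumc(B') + s$ gives $s = \sumc(A) - \sumc(B')$ with $A, B' \subseteq S$ disjoint, the required $\pm 1$-combination. There is no substantive obstacle here: the argument of \cref{cl:linear_combin} is purely about entrywise addition of $\{0,1\}$-valued objects, and this structure is identical whether the objects are columns of a matrix or $\lambda$-slices of a tensor.
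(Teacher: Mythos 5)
Your second approach---repeating the argument of \cref{cl:linear_combin} verbatim at the level of slices---is exactly what the paper has in mind when it says the proof is ``analogous to \cref{cl:linear_combin},'' and it is correct. Your first approach, reducing to the matrix case via the same flattening used in \cref{cl:IS-tensors}, is a clean alternative that works just as well: the flattening is a bijection between $\lambda$-slices and columns of $M_T$ that respects entrywise addition, so independence and maximality transfer, and \cref{cl:linear_combin} then gives the $\pm 1$-combination of columns, which pulls back to slices. The flattening route has the minor virtue of reusing machinery already set up rather than re-running an argument, but both are equally rigorous and there is no gap in either.
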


We omit the proof of \cref{cl:tensor_linear_combin} as it is analogous to
\cref{cl:linear_combin}.

\begin{proof}[Proof of \cref{thm:tensors}]
    The proof is by induction. The base case is $\ell=2$, which is proven by \cref{thm:main}. For $\ell>2$, arbitrarily choose a coordinate  $\lambda$ and use \cref{cl:tensor_linear_combin} to express $T$ as a $\pm 1$-linear combination of a maximal independent set of $\lambda$-slices $S$. Each $\lambda$-slice of $S$ has tensor rank at most $r$, and so by the inductive hypothesis, each $\lambda$-slice in $S$ can be expressed as a $\pm 1$-linear combination of $(cr \log r)^{(\ell - 2)}$ order-$(\ell-1)$ primitive tensors.

    In a similar fashion to the proof of \cref{thm:main}, we therefore can write $T$ as a $\pm 1$-sum of ${|S| \cdot (cr \log r)^{(\ell - 2)}}$ order-$\ell$ primitive tensors. Conclude by substituting $|S| = O(r \log r)$ using \cref{cl:IS-tensors}.
\end{proof}

\begin{remark}
We note that the proof of \Cref{thm:tensors} only uses the tensor rank as an upper bound for the \emph{flattening rank} of a tensor, which is the maximal rank of any flattening of it as an $n_\lambda \times \left( \prod_{i \ne \lambda} n_i \right)$ matrix over $\lambda \in [\ell]$. If this flattening rank is $r$, then the conclusion of \Cref{thm:tensors} still holds. 

Moreover, in this formulation, the quantitative bound of $O((r \log r)^{\ell-1})$ primitive tensors is near optimal. Indeed, consider all tensors $T:[r]^{\ell} \to \{0,1\}$. Any such tensor has flattening rank $\le r$. On the other hand, a simple counting argument shows that most such tensors need $\Omega(r^{\ell-1}/\ell)$ primitive tensors in their decomposition.

If we return however to the original formulation of \Cref{thm:tensors} using tensor rank, it is unclear if the bound is close to tight. In fact, we suspect that it can be significantly improved.
\end{remark}

\begin{question}
Can the bound in \Cref{thm:tensors} be improved to $o(r^{\ell-1})$?
\end{question}

\section*{Acknowledgments} 
We are grateful to Adi Shraibman and Amir Yehudayoff for enlightening discussions.
\bibliographystyle{alpha}
\bibliography{ref}

\end{document}